\documentclass[letterpaper,10pt]{article}

\author{Daniel S. Roche\\
Computer Science Department\\
United States Naval Academy\\
Annapolis, Maryland, U.S.A.}

\def\tit{Error correction in fast matrix multiplication and inverse}
\def\auts{Daniel S.\ Roche}

\usepackage{multirow}
\usepackage[ruled,boxed,linesnumbered,vlined,noend]{algorithm2e}
\usepackage{etoolbox}

\newtoggle{sage}

\togglefalse{sage}

\usepackage[utf8]{inputenc}
\usepackage[T1]{fontenc}
\usepackage[english]{babel}
\usepackage[svgnames]{xcolor}
\usepackage{xstring}
\usepackage{amsmath}
\usepackage{amssymb}
\usepackage{amsthm}
\usepackage{mathabx}
\usepackage{mathtools}
\usepackage{paralist}
\usepackage{fancyhdr}
\usepackage{lastpage}
\usepackage[longnamesfirst]{natbib}
\usepackage[backref=page,pdfpagelabels=true]{hyperref}

\hypersetup{
  pdfauthor={\auts},
  pdftitle={\tit},
  colorlinks,
  linkcolor=DarkBlue,
  citecolor=DarkGreen,
  urlcolor=DarkBlue,
  bookmarksnumbered,
}
\renewcommand{\backref}[1]{Referenced on %
  \expandarg\StrCount{#1}{,}[\ncommas]%
  \ifthenelse{\ncommas = 0}{page #1}%
  {pages \StrBefore[\ncommas]{#1}{,}\ and\StrBehind[\ncommas]{#1}{,}}%
.}

\bibliographystyle{plainnat}
\newcommand{\doi}[1]{doi: \href{http://dx.doi.org/#1}{\path{#1}}}

\pagestyle{fancy}
\fancyhf{}
\fancyhead[C]{\it\color{red}Preprint dated \today}
\fancyfoot[C]{Page \thepage{} of \pageref{LastPage}}

\numberwithin{equation}{section}

\title{\tit}

\usepackage{enumitem}
\usepackage[capitalise,noabbrev,nameinlink]{cleveref}
\Crefname{enumi}{Step}{Steps}
\usepackage{todonotes}

\newcommand{\nzrowstext}{\texttt{FindNonzeroRows}}
\newcommand{\nzrows}{\texorpdfstring{\hyperlink{nzrows}{\nzrowstext{}}}{\nzrowstext{}}}
\newcommand{\rootfindtext}{\texttt{FindRoots}}
\newcommand{\rootfind}{\texorpdfstring{\hyperlink{rootfind}{\rootfindtext{}}}{\rootfindtext{}}}
\newcommand{\spinterptext}{\texttt{MultiSparseInterp}}
\newcommand{\spinterp}{\texorpdfstring{\hyperlink{spinterp}{\spinterptext{}}}{\spinterptext{}}}
\newcommand{\diffevaltext}{\texttt{DiffEval}}
\newcommand{\diffeval}{\texorpdfstring{\hyperlink{diffeval}{\diffevaltext{}}}{\diffevaltext{}}}
\newcommand{\mulerrtext}{\texttt{MultiplyEC}}
\newcommand{\mulerr}{\texorpdfstring{\hyperlink{mulerr}{\mulerrtext{}}}{\mulerrtext{}}}
\newcommand{\inverserrtext}{\texttt{InverseEC}}
\newcommand{\inverserr}{\texorpdfstring{\hyperlink{inverserr}{\inverserrtext{}}}{\inverserrtext{}}}

\newcommand{\expons}{\ensuremath{\mathcal{S}}}

\newcommand{\rowinds}{\mathcal{J}}
\newcommand{\rootlist}{\mathcal{Q}}
\newcommand{\polyroots}{\mathcal{R}}

\newcommand{\ZZ}{\ensuremath{\mathbb{Z}}}
\newcommand{\NN}{\ensuremath{\mathbb{N}}}
\newcommand{\QQ}{\ensuremath{\mathbb{Q}}}
\newcommand{\RR}{\ensuremath{\mathbb{R}}}
\newcommand{\CC}{\ensuremath{\mathbb{C}}}
\newcommand{\F}{\ensuremath{\mathsf{F}}}

\newcommand{\GF}[1]{\ensuremath{\mathbb{F}_{#1}}}

\newcommand{\bnd}[2]{\ensuremath{#1\mathopen{}\left(#2\right)\mathclose{}}}
\newcommand{\oh}[1]{\bnd{O}{#1}}
\newcommand{\softoh}[1]{\bnd{\widetilde{O}}{#1}}
\newcommand{\ceil}[1]{\ensuremath{\left\lceil#1\right\rceil}}
\newcommand{\floor}[1]{\ensuremath{\left\lfloor#1\right\rfloor}}
\newcommand{\supp}{\ensuremath{\mathsf{supp}}}

\newcommand{\vv}{\ensuremath{\mathbf{v}}}
\newcommand{\xx}{\ensuremath{\mathbf{x}}}
\newcommand{\uu}{\ensuremath{\mathbf{u}}}

\DeclareMathOperator{\llog}{loglog}

\newtheorem{theorem}{Theorem}[section]
\newtheorem{lemma}[theorem]{Lemma}

\newtheorem{fact}[theorem]{Fact}
\newtheorem{cor}[theorem]{Corollary}

\DontPrintSemicolon
\SetKw{break}{break}

\begin{document}
\maketitle

\begin{abstract}
  We present new algorithms to detect and correct errors in the
  product of two matrices, or the inverse of a matrix, over an
  arbitrary field. Our algorithms do not require any additional
  information or encoding other than the original inputs and the erroneous
  output. Their running time is softly linear in the number of nonzero entries in
  these matrices when the number of errors is sufficiently small, and
  they also incorporate fast matrix multiplication so that the cost scales
  well when the number of errors is large.
  These algorithms build on the recent result of \citet{GLL+17} on
  correcting matrix products, as well as existing work on verification
  algorithms, sparse low-rank linear algebra, and sparse polynomial
  interpolation.
\end{abstract}

\section{Introduction}

Efficiently and gracefully handling computational errors is critical in
modern computing. Such errors can result from short-lived hardware
failures, communication noise, buggy software, or even malicious
third-party servers.

The first goal for fault-tolerant computing is \emph{verification}.
\citet{Fre79} presented a
linear-time algorithm to verify the correctness of a single
matrix product. Recently, efficient verification algorithms for
a wide range of computational linear algebra
problems have been developed
\citep{KNS11,DK14,DKTV16,DLP17}.

A higher level of fault tolerance is achieved through \emph{error
correction}, which is the goal of the present study. This is a strictly
stronger model than verification, where we seek not only to identify
when a result is incorrect, but also to compute the correct result if it
is ``close'' to the given, incorrect one.
Some recent error-correction problems considered in the computer algebra
literature include Chinese remaindering \citep{GRS99,KPR+10,BDFP15},
system solving \citep{BK14,KPSW17},
and function recovery \citep{CKP12,KY13}.

A generic approach to correcting errors \emph{in transmission}
would be to use an error-correcting code, writing the result with some
redundancy to support transmission over a noisy channel.
But this approach requires extra bandwidth, is limited to a fixed number
of errors, and does not account for malicious alterations or mistakes in
the computation itself.

Instead, we will use the information in the
problem itself to correct  errors. This is always possible by
re-computing the result, and so the goal is always to
correct a small number of errors in (much) less time than it would take
to do the entire computation again. Our notion of
error correction makes it
possible to correct an unbounded number of mistakes that occur either in
computation or transmission, either at random or by a malicious party.

Some other applications of error correction have nothing to do with
alterations or mistakes. One example is sparse
matrix multiplication, which can be solved with an error-correction
algorithm by simply assuming the ``erroneous'' matrix is zero. Another
example is computing a result over $\ZZ$ using Chinese remaindering
modulo small primes, where the entries have varying bit lengths. The
small entries are determined modulo the first few primes, and the
remaining large entries are found more quickly by applying an error
correction algorithm using (known) small entries.

\subsection{Our results}

\paragraph{Matrix multiplication with errors}
We consider two computational error correction problems. The first
problem is the same as in \citet{GLL+17}, correcting a matrix product.
Given $A,B,C\in\F^{n\times n}$, the task is to compute the unique
error matrix $E\in\F^{n\times n}$ such that $AB=C-E$.

\Cref{alg:mulerr} solves this problem using
$$\softoh{t + k\cdot \min\left(\ceil{\tfrac{t}{r}},
  n/\min(r,\tfrac{k}{r})^{3-\omega}\right)}$$
field operations, where
\begin{itemize}[nosep]
\item $t = \#A + \#B + \#C$ is the number of nonzero entries in
  the input matrices, at most $O(n^2)$;
\item $k=\#E$ is the number of errors in the given product $C$; and
\item $r\le n$ is the number of distinct rows (or columns) in
  which errors occur, whichever is larger.
\end{itemize}

\begin{table}[b]
  \centering\begin{tabular}{c|c|c|c}
  & \multicolumn{2}{c|}{Spread-out errors} & Compact errors \\
  & \multicolumn{2}{c|}{$r = \min(n,k)$} & $r=\sqrt{k}$ \\
  & $k\le n$ & $k > n$ & \\
  \hline &&& \\[-10pt]
  Sparse
    & $t+k+n$
    & $k\ceil{\tfrac{t}{n}} + n$
    & $\sqrt{k} t + k + n$ \\[2pt]
  \hline &&& \\[-10pt]
  \multirow{2}{*}{Dense}
    & \multirow{2}{*}{$t+kn$}
    & $k^{\omega-2}n^{4-\omega}$
    & $t + k^{(\omega-1)/2} n$ \\
    &
    & $k^{0.38}n^{1.63}$
    & $t + k^{0.69}n$
  \end{tabular}
\caption{Soft-oh cost of \cref{alg:mulerr} in different cases.
The two rows indicate whether sparse
or dense rectangular multiplication is used as a subroutine.}
\label{tab:mul}
\end{table}

An even more detailed complexity statement, incorporating rectangular
matrices and a failure probability, can be found in
\cref{thm:mulerr}.
To understand the implications of this complexity, consider
six cases, as summarized in \cref{tab:mul}.

The cost of our algorithm depends on the locations of the errors.
If there are a constant number of errors per row or column,
we can use sparse multiplication to achieve softly-linear
complexity $\softoh{t+n+k}$. This is similar to
\citep{GLL+17} except that we do not require the input matrices to be
dense.

If $k>n$ errors are spread out among all
rows and columns, the worst-case cost
using dense multiplication is $\softoh{k^{0.38}n^{1.63}}$. As the number
of errors grows to $n^2$, this cost approaches that of normal matrix
multiplication without error correction.

In the other extreme, if the errors are isolated to a square (but not
necessarily contiguous) submatrix, then the worst-case cost using dense
multiplication is $\softoh{t = k^{0.69}n}$. Again this scales to
$n^\omega$ as $k \to n^2$, but the cost is better for $k<n^2$.
This situation may also make sense in the context of distributed
computing, where each node in a computing cluster might be assigned to
compute one submatrix of the product.

Our algortihm is never (asymptotically) slower than that of
\citep{GLL+17} or the $O(n^\omega)$ cost of na\"ive recomputation, but
it is faster in many cases: for example when the inputs are sparse, when there are
many errors, and when the errors are compactly located.

\paragraph{Matrix inverse with errors}
The second problem we consider is that of correcting computational
errors in a matrix inverse. Formally, given $A,B\in\F^{n\times n}$,
the task is to compute the unique error matrix $E\in\F^{n\times n}$ such that
$A^{-1}=B+E$.

\Cref{alg:inverserr} shows how to solve this problem using
$$\softoh{t + nk/\min(r,\tfrac{k}{r})^{3-\omega} + r^\omega}$$
field operations, where the parameters $t, k, n$ are the same as before,
and $r$ is the number of rows or columns which contain errors, whichever
is \emph{smaller}.
This is the same complexity as our algorithm for
matrix multiplication with errors, plus an additional $r^\omega$ term.

The additional $r^\omega$ term makes the situation here less fractured than
with the multiplication algorithm.
In the case of spread-out errors, the cost is
$\softoh{t+nk+k^\omega}$, which becomes simply
$\softoh{n^2+k^\omega}$ when the input matrices are dense.
The case of compact errors is better, and the complexity is just
$\softoh{t+nk^{(\omega-1)/2}}$, the same as that of our multiplication
algorithm

As before, the cost of this algorithm
approaches the $O(n^\omega)$ cost of the na\"ive inverse computation
as the number of errors $k$ grows to $n^2$.

\paragraph{Domains}
The algorithms we present work over any field and require only that a
field element of order larger than $n$ can be found.
Because our algorithms explicitly take $O(n)$ time to compute the powers
of this element anyway, the usual difficulties with efficiently generating
high-order elements in finite fields do not arise.

Over the integers $\ZZ$ or rationals $\QQ$, the most efficient
approach would be to work modulo a series of primes and use Chinese
remaindering to recover the result. A similar approach could be used
over polynomial rings.

Over the reals $\RR$ or complex numbers $\CC$, our algorithms work only
under the unrealistic assumption of infinite precision.
We have not
considered the interesting question of how to recover from the two
types of errors (noise and outliers) that may occur in finite precision.

\subsection{Related work}

Let $\omega$ be a constant between 2 and 3 such that two $n\times n$
matrices can be multiplied using $O(n^\omega)$ field operations.
In practice, $\omega=3$ for
dimensions $n$ up to a few hundred, after which Strassen-Winograd
multiplication is used giving $\omega = < 2.81$.
The best asymptotic algorithm currently gives
$\omega < 2.3728639$ \citep{LGal14}.

Even though the practical
value of $\omega$ is larger than the asymptotic one, matrix
multiplication routines have been the subject of intense implementation
development for decades, and highly-tuned software is readily available
for a variety of architectures \citep{DGP08,WZZY13}.
Running times based on $n^\omega$ have
practical as well as theoretical significance; the $\omega$ indicates
where an algorithm is able to take advantage of fast low-level matrix
multiplication routines.

Special routines for multiplying sparse matrices have also been
developed.
Writing $t$ for the number of nonzero entries in the input, the standard row- or
column-wise sparse matrix multiplication costs $O(tn)$ field operations.
\citet{YZ05} improved this to
$\softoh{n^2 + t^{0.697}n^{1.197}}$.
Later work also incorporates the number $k$ of
nonzeros in the product matrix. \citet{Lin09} gave an output-sensitive
algorithm with running time $O(k^{0.186}n^2)$.
\citet{AP09} have another with complexity
$\softoh{t^{0.667}k^{0.667} + t^{0.862}k^{0.408}}$, an improvement when
the outputs are not too dense.
\citet{Pag13} developed a different approach with complexity
$\softoh{t+kn}$; that paper also includes a nice summary of the state of
the art. Our multiplication with errors algorithm can also be used for
sparse multiplication, and it provides a small improvement when the
input and output are both not \emph{too} sparse.

The main predecessor to our work is the recent algorithm of
\citet{GLL+17}, which can correct up
to $k$ errors in the product of two $n\times n$ matrices
using $\softoh{n^2 + kn}$ field operations. Their approach is based
in part on the earlier work of \citet{Pag13} and makes clever use of
hashing and fast Fourier transforms
in order to achieve the stated complexity.

Computing the inverse of a matrix has the same asymptotic complexity
$O(n^\omega)$ as fast matrix multiplication \citep{BH74}. In practice,
using recent versions of the FFPACK library \citep{DGP08} on an Intel desktop
computer, we find that a single $n\times n$ matrix inversion is only
about 33\% more expensive than a matrix-matrix product of the same size.

\section{Overview}

\paragraph{Multiplication with errors}\label{sec:over:mult}

A high-level summary of our multiplication algorithm is as follows:

\begin{enumerate}[noitemsep]
  \item \label{mulover:nzrows}
    Determine which rows in the product contain errors.
  \item
    Write the unknown sparse matrix of
    errors as $E = C - AB$. Remove the rows with no errors from $E, C,$
    and $A$,
    so we have $E' = C'-A'B$.
  \item \label{mulover:evals}
    Treating the rows of $E'$ as sparse polynomials, use structured
    linear algebra and fast matrix multiplication to
    evaluate each row polynomial at a small number of points.
  \item \label{mulover:spinterp}
    Use sparse polynomial interpolation to recover at least half of the
    rows from their evaluations.
  \item Update $C$ and iterate $O(\log k)$ times to recover all $k$
    errors.
\end{enumerate}

To determine the rows of $E$ which are nonzero in \cref{mulover:nzrows},
we apply a simple
variant of Frievalds' verification algorithm: for a random vector $\vv$,
the nonzero rows of $E\vv$ are probably the same as the nonzero rows of
$E$ itself.

The evaluation/interpolation approach that follows is reminiscent of that in
\citet{GLL+17}, with two important differences.
First, using sparse polynomial techniques rather than hashing and FFTs
means the complexity scales linearly with the number of nonzeros $t$ in the
input rather than the dense matrix size $n^2$. Second, by treating the
rows separately rather than recovering all errors at once, we are able
to incorporate fast matrix multiplication so that our worst-case cost
when all entries are erroneous is
never more than $O(n^\omega)$, the same as that of na\"ive
recomputation.

\Cref{sec:nzrows} provides details on the Monte Carlo algorithm to
determine the rows and columns where errors occur (\cref{mulover:nzrows}
above), \cref{sec:spinterp} presents a variant of sparse polynomial
interpolation that suits our needs for \cref{mulover:spinterp},
and \cref{sec:evals} explains how to perform the row
polynomial evaluations in \cref{mulover:evals}.
A certain rectangular matrix multiplication in this step turns out to
dominate the overall complexity.
We connect all these pieces of the
multiplication algorithm in
\cref{sec:mul}.
\iftoggle{sage}{Finally, \cref{sec:imp} shows some preliminary
results of a Sage implementation of our algorithm.}{}

\paragraph{Inverse with errors}\label{sec:over:inverse}

Our algorithm for correcting errors in a matrix inverse is based on the
multiplication algorithm, but with two complications. Writing $A$ for the
input matrix, $B$ for the erroneous inverse, and $E$ for the (sparse)
matrix of errors in $B$, we have:
$$AE = I - AB, \qquad EA = I - BA.$$

We want to compute $E\vv$ for a random vector $\vv$
in order to determine which rows of $E$ are nonzero. This is no longer
possible directly, but using the second formulation as above, we can
compute $EA\vv$. Because $A$ must be nonsingular, $A\vv$ has the same
distribution as a random vector, so this approach still works.

The second complication is that removing zero rows of $E$ does not change the
dimension of the right-hand side, but only removes corresponding
\emph{columns} from $A$ on the left-hand side.
We make use of the recent result of \citet{CKL13}, which shows how to
quickly find a maximal set of linearly independent rows in a sparse
matrix. This allows us to find a small submatrix $X$ of $A$ such that
$XE' = I' - A'B$, with $E', I', A'$ being compressed matrices as
before. Because the size of $X$ depends on the number of errors, it can
be inverted much more quickly than re-computing the entire inverse of
$A$.

The dominating cost in most cases is the same rectangular matrix product
as needed as in matrix multiplication with error correction. However,
here we also have the extra cost of finding $X$ and computing its
inverse, which dominates the complexity when there are a moderate number
of errors and their locations are spread out.

This algorithm --- which depends on the subroutines of
\cref{sec:nzrows,sec:spinterp,sec:evals} --- is presented in detail in
\cref{sec:inverse}.
\iftoggle{sage}{We also discuss
a preliminary Sage implementation of in \cref{sec:imp}}{}.

\section{Notation and preliminaries}

The soft-oh notation $\softoh{\cdots}$ is the same as the usual big-oh
notation but ignoring sub-logarithmic factors:
$f \in \softoh{g}$ if and only if $f\in \oh{g \log^{O(1)} g}$, for some
runtime functions $f$ and $g$.

We write $\NN$ for the set of nonnegative integers and
$\GF{q}$ for the finite field with $q$ elements.

For a finite set $\expons$, the number of elements in $\expons$
is denoted $\#\expons$, and
$\mathbb{P}(\expons)$ is the powerset of
$\expons$, i.e., the set of subsets of $\expons$.

The number of nonzero entries in a matrix $M\in\F^{m\times n}$ is written as $\#M$.
Note that $\#M \le mn$, and if $M$ has rank $k$ then $\#M \ge k$.

For a univariate polynomial $f\in\F[x]$,
$\supp(f) \subseteq \NN$ denotes the exponents of nonzero terms of $f$.
Note that $\#\supp(f) \le \deg f + 1$.

We assume that the number of field operations
to multiply two (dense) polynomials in $\F[x]$ with degrees less than
$n$ is $\softoh{n}$.
This is justified by the generic algorithm from \citet{CK91}
which gives $\oh{n\log n \llog n}$, or more result results of
\citet{HHL17} which improve this to $\oh{n\log n\, 8^{\log^* n}}$
for finite fields.

As discussed earlier, we take $\omega$ with $2\le \omega \le 3$ to be
any feasible exponent of matrix multiplication.
By applying fast matrix multiplication in blocks, it is also
possible to improve the speed of rectangular matrix multiplication in a
straightforward way. In particular:

\begin{fact}\label{thm:rectmul}
The product of any $m\times\ell$ matrix
times a $\ell\times n$ matrix can be found using
$\oh{m\ell n / \min(m,\ell,n)^{3-\omega}}$ ring operations.
\end{fact}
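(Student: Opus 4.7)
The plan is to reduce the rectangular product to a collection of square block products and then apply the assumed $O(s^\omega)$ bound for square matrix multiplication, where $s = \min(m,\ell,n)$.

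First, I would set $s = \min(m,\ell,n)$ and, by padding with zero rows and columns if needed (which changes each dimension by at most a factor of $2$ and hence affects the final bound only by a constant), assume without loss of generality that $s$ divides each of $m$, $\ell$, and $n$. Then I would partition the left matrix $A \in \F^{m \times \ell}$ into an $(m/s) \times (\ell/s)$ grid of $s \times s$ blocks, and similarly partition the right matrix $B \in \F^{\ell \times n}$ into an $(\ell/s) \times (n/s)$ grid of $s \times s$ blocks. The $(i,j)$ block of the product $AB$ is then the sum of $\ell/s$ products of $s \times s$ blocks in the standard way.

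The cost analysis is then immediate: there are $(m/s)(n/s)$ output blocks, each requiring $\ell/s$ block products and additions, for a total of $(m\ell n)/s^3$ square block multiplications. By hypothesis each such multiplication uses $O(s^\omega)$ ring operations, and the block additions contribute $O(s^2)$ per block which is absorbed. Multiplying through gives
\[
O\!\left(\frac{m\ell n}{s^3} \cdot s^\omega\right) = O\!\left(\frac{m\ell n}{s^{3-\omega}}\right) = O\!\left(\frac{m\ell n}{\min(m,\ell,n)^{3-\omega}}\right),
\]
which is the claimed bound.

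There is no real obstacle here; the only mildly delicate point is the padding argument when $s$ does not divide the other two dimensions, but since at most $s-1 < s$ rows or columns of zeros need to be appended to each dimension, the padded dimensions are bounded by $2m$, $2\ell$, $2n$, and the asymptotic bound is preserved. I would also briefly note that the bound is independent of which of the three dimensions attains the minimum, since the formula $m\ell n / s^{3-\omega}$ is symmetric in the three dimensions.
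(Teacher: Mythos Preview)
Your proof is correct and matches the paper's approach: the paper states this as a \emph{Fact} without formal proof, preceded only by the sentence ``By applying fast matrix multiplication in blocks, it is also possible to improve the speed of rectangular matrix multiplication in a straightforward way.'' Your blocking argument with $s = \min(m,\ell,n)$ is exactly the intended justification, and your handling of the padding detail is fine.
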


Note that it might be
possible to improve this for certain values of $m,\ell,n$ using
\cite{LGal12}, but that complexity is much harder to state and we have
not investigated such an approach.

We restate two elementary facts on sparse matrix multiplication.

\begin{fact}\label{thm:sparseapply}
  For any $M\in\F^{m\times n}$ and $\vv\in\F^n$, the matrix-vector
  product $M\vv$ can
  be computed using $O(\#M)$ field operations.
\end{fact}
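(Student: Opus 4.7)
The plan is to give a direct constructive proof using the standard sparse matrix-vector product procedure. Assume $M$ is stored in some sparse representation, for example as a list of triples $(i,j,M_{ij})$ enumerating only the nonzero entries, of which there are $\#M$ in total; equivalently one can think of $M$ as given row by row in compressed sparse row form.

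The algorithm is to initialize the output vector $\uu \in \F^m$ to zero and then, for each nonzero entry $M_{ij}$ in the representation of $M$, perform the single update $u_i \leftarrow u_i + M_{ij}\,v_j$. Each such update uses one field multiplication and one field addition, so the total number of field operations spent inside the main loop is exactly $2\#M = O(\#M)$. The correctness is immediate from the definition $u_i = \sum_j M_{ij} v_j$, restricted to the indices $j$ where $M_{ij}\neq 0$ (since the zero terms contribute nothing).

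The only mild subtlety is the initialization cost of the output vector, which is nominally $O(m)$ rather than $O(\#M)$. This is handled by the convention that every row of $M$ with at least one nonzero contributes to $\#M$, so we only need to explicitly touch entries $u_i$ for which row $i$ of $M$ is nonzero; rows of $M$ that are entirely zero yield $u_i = 0$ and can be left implicit in the output (or we can simply return $\uu$ in a sparse representation, whose size is bounded by the number of nonzero rows of $M$, which is at most $\#M$). Either way, the total cost in field operations is $O(\#M)$. There is no real obstacle here — the fact is essentially a bookkeeping observation, and the proof is a one-paragraph description of the schoolbook sparse matvec algorithm.
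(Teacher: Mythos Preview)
Your proof is correct; it is the standard sparse matrix--vector product argument. The paper itself does not give a proof at all --- it simply states this as a \emph{Fact} (alongside the companion corollary on sparse matrix--matrix products) and moves on, treating it as folklore. So there is nothing to compare against: you have supplied more detail than the paper does, including the observation about returning the output in sparse form to avoid an $O(m)$ initialization cost, which the paper does not discuss.
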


\begin{cor}\label{thm:sparsemul}
  For any $A\in\F^{m\times \ell}$ and $B\in\F^{\ell \times n}$,
  their
  product $AB$ can be computed using $O(\#A\cdot n)$ field operations.
\end{cor}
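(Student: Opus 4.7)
The plan is to reduce this immediately to the matrix-vector case of \cref{thm:sparseapply}. Write $B$ as a concatenation of its $n$ columns $\vv_1,\vv_2,\ldots,\vv_n\in\F^\ell$, so that the $j$-th column of the product $AB$ is exactly $A\vv_j$. Computing all $n$ such matrix-vector products and assembling them into a matrix yields $AB$.

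By \cref{thm:sparseapply} applied with $M=A$, each individual product $A\vv_j$ can be obtained in $O(\#A)$ field operations, regardless of the density of $\vv_j$. Summing over the $n$ columns of $B$ gives a total cost of $O(\#A\cdot n)$ field operations, as claimed. No additional overhead is incurred by the assembly, since each scalar multiplication contributes to exactly one entry of the output.

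There is no real obstacle here; the statement is a direct corollary, and the only thing worth noting is that the bound depends only on $\#A$ and $n$, not on $\#B$ or $\ell$. A symmetric argument via rows of $A$ against the full matrix $B$ would instead yield an $O(\#B\cdot m)$ bound, so in practice one would apply whichever of the two versions gives the better complexity for the problem at hand.
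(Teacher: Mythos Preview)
Your argument is correct and is exactly the intended one: the paper states \cref{thm:sparsemul} as an immediate corollary of \cref{thm:sparseapply} without giving any proof, and your column-by-column reduction is the standard justification. Your closing remark about the symmetric $O(\#B\cdot m)$ bound is also accurate and in the spirit of how the paper later chooses between sparse and dense multiplication.
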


\section{Identifying nonzero rows and columns}\label{sec:nzrows}

Our algorithms for error correction in matrix product and inverse
computation both begin by
determining which rows and columns contain errors.
This is accomplished in turn by applying a random row or column vector
to the unknown error matrix $E$.

For now, we treat the error matrix $E$ as a \emph{black box}, i.e., an
oracle matrix which we can multiply by a vector on the right or
left-hand side. The details of the black box construction differ for the
matrix multiple and inverse problem, so we delay those until later.

The idea is to apply a row or column vector of random values to the
unknown matrix $E$, and then examine which entires in the resulting
vector are zero. This is exactly the same idea as the classic Monte
Carlo verification algorithm of \citet{Fre79}, which we extend to which
rows or columns in a matrix are nonzero. This black-box procedure is
detailed in \cref{alg:nzrows} \nzrows{}.

\begin{algorithm}[htbp]
  \caption{\protect\hypertarget{nzrows}{\nzrowstext{}}%
    $(V\mapsto MV, \epsilon)$
  \label{alg:nzrows}}
  \KwIn{Black box for right-multiplication by an unknown matrix
    $M\in\F^{m\times n}$ and error bound $\epsilon\in\RR, 0<\epsilon<1$}
  \KwOut{Set $\rowinds\subseteq\{0,\ldots,m-1\}$ that with probability
    at least $1-\epsilon$ contains the index of all nonzero rows in
    $M$}
  \BlankLine
  $\ell \gets \ceil{\log_{\#\F}\tfrac{m}{\epsilon}}$ \;
  $V \gets$ matrix in $\F^{n\times \ell}$ with uniform random entries
    from $\F$ \;
  $B \gets MV$ using the black box \;
  \Return{Indices of rows of $B$ which are not all zeros}
\end{algorithm}

The running time of \nzrows{} depends entirely on the size of $\F$
and the cost of performing one black-box apply over the field $\F$. The correctness
depends on the parameter $\epsilon$.

\begin{lemma}\label{thm:nzrows}
  \nzrows{} always returns a
  list of nonzero row indices in $M$
  With probability at least $1-\epsilon$,
  it returns the index of \emph{every} nonzero row in $M$.
\end{lemma}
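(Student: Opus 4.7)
The plan is to prove the two claims separately: first the deterministic ``soundness'' that every returned index truly corresponds to a nonzero row, then the probabilistic ``completeness'' that every nonzero row is captured with probability at least $1-\epsilon$.

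For soundness, I would simply unfold the definition of $B=MV$: the $i$-th row of $B$ equals $M_i V$ where $M_i$ is the $i$-th row of $M$. If $M_i$ is the zero vector, then $M_i V$ is identically zero, so the index $i$ is never placed in the output. Contrapositively, every returned index corresponds to a nonzero row of $M$, which holds regardless of the random choices.

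For completeness, fix any nonzero row $M_i \in \F^{1\times n}$, and let $j_0$ be an index with $M_{i,j_0}\ne 0$. For a single uniformly random column $v\in\F^n$ of $V$, condition on the values $v_j$ for $j\ne j_0$; then $M_i v = M_{i,j_0} v_{j_0} + (\text{constant})$ is a bijective affine function of the remaining uniform entry $v_{j_0}$, so $\Pr[M_i v = 0] = 1/\#\F$ exactly. Since the $\ell$ columns of $V$ are independent, the probability that all $\ell$ corresponding entries in the $i$-th row of $B$ vanish is at most $(\#\F)^{-\ell}$. A union bound over the at most $m$ nonzero rows of $M$ then gives a total failure probability of at most $m\,(\#\F)^{-\ell}$. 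Substituting $\ell=\lceil\log_{\#\F}(m/\epsilon)\rceil$ yields $(\#\F)^{\ell}\ge m/\epsilon$, so the failure probability is at most $\epsilon$, as required.

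There is really no difficult step here; the argument is a standard Schwartz--Zippel-style calculation together with a union bound. The only point requiring any care is being explicit that each entry of $M_iV$ is exactly (not merely at most) $1/\#\F$-likely to be zero, so that $\ell$ independent columns give the clean bound $(\#\F)^{-\ell}$ and the chosen $\ell$ suffices. If $\F$ is infinite the statement needs to be read with the convention that entries are drawn from a finite subset of size at least $m/\epsilon$, but the rest of the argument is unchanged.
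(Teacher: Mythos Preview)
Your proof is correct and follows essentially the same approach as the paper: the soundness direction is the trivial observation that zero rows of $M$ stay zero in $MV$, and the completeness direction fixes a nonzero entry, conditions on the remaining coordinates to get probability exactly $1/\#\F$ per column, uses independence across the $\ell$ columns, and finishes with a union bound over the at most $m$ nonzero rows together with the choice $\ell=\lceil\log_{\#\F}(m/\epsilon)\rceil$. Your remark about infinite fields matches the paper's own aside after the proof.
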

\begin{proof}
  Consider a single row $\uu^T$ of $M$.
  If $\uu$ is zero, then the corresponding row of $MV$ will always be
  zero; hence every index returned must be a nonzero row.

  Next assume $\uu^T$ has at least one nonzero entry, say at
  index $i$, and
  consider one row $\vv\in\F^n$ of the random matrix $V$.
  Whatever the other entries in $\uu$ and $\vv$ are, there is exactly one value
  for the $i$'th entry of $v$ such that $\uu^T\vv=0$.
  So the probability that $\uu^T\vv=0$ is $\tfrac{1}{\#\F}$,
  and by independence the probability that $\uu^TV = \mathbf{0}^{1\times
  \ell}$ is $1/(\#\F)^\ell$.

  Because $M$ has at most $m$ nonzero rows, the union bound tells us
  that the probability \emph{any one} of the corresponding rows of $MV$
  is zero is at most $m/(\#F)^\ell$. From the definition of $\ell$, this
  ensures a failure probability of at most $\epsilon$.
\end{proof}

Because we mostly have in mind the case that $\F$ is a finite field, we
assume in \nzrows{} and in the preceding proof
that it is possible to sample uniformly from the
entire field $\F$. However, the same results would hold if sampling from
any fixed subset of $\F$ (perhaps increasing the size of $\ell$ if the
subset is small).

\section{Batched low-degree sparse interpolation}\label{sec:spinterp}

Our algorithms use techniques from sparse polynomial
interpolation to find the
locations and values of erroneous entries in a matrix product or
inverse.

Consider an unknown univariate polynomial $f\in \F[x]$ with degree less
than $n$ and at most $s$ nonzero terms.
The nonzero terms of $f$ can be uniquely recovered from evaluations
$f(1),f(\theta),\ldots,f(\theta^{2s-1})$ at $2s$ consecutive powers
\citep{BT88,KY89}.

Doing this efficiently in our context
requires a few variations to the typical algorithmic approach.
While the state of the art seen in recent papers such as
\citep{JM10,Kal10a,AGR15,HG17} achieves softly-linear complexity in
terms of the sparsity $t$, there is either a higher dependency on the
degree, a restriction to certain fields, or both.

Here, we show how to take advantage of two unique aspects of the problem
in our context.
First, we will interpolate a \emph{batch} of sparse polynomials at
once, on the same evaluation points, which allows for some
useful precomputation.
Secondly, the maximum degree of any polynomial we interpolation is $n$, the
column dimension of the unknown matrix, and we are allowed linear-time
in $n$.

The closest situation in the literature is that of
\citet{HL13}, who show
how to recover a sparse polynomial if the exponents are known.
In our case, we can tolerate a much larger set containing the possible
exponents by effectively amortizing the size of this exponent set over
the group of batched sparse interpolations.

Recall the general outline of Prony's method
for sparse interpolation of $f\in\F[x]$ from
evaluations $(f(\theta^i))_{0\le i < 2s}$:
\begin{enumerate}[noitemsep]
  \item\label{step:prony:bm}
    Find the minimum polynomial $\Gamma\in\F[z]$ of the sequence of
    evaluations.
  \item \label{step:prony:roots}
    Find the roots $v_1,\ldots,v_s$ of $\Gamma$.
  \item \label{step:prony:dlog}
    Compute the discrete logarithm base $\theta$ of each $v_i$
    to determine the exponents of $f$.
  \item \label{step:prony:vand}
    Solve a transposed Vandermonde system built from $v_i$ and the
    evaluations to recover the coefficients of $f$.
\end{enumerate}

\Cref{step:prony:bm,step:prony:vand} can be performed over
any field using fast multipoint evaluation and interpolation in
$\softoh{s}$ field operations \citep{KY89}. However, the other steps
depend on the field and in particular \cref{step:prony:dlog} can be
problematic over large finite fields.

To avoid this issue, we first pre-compute all possible roots $\theta^{e_i}$
for any minpoly $\Gamma_i$ in any of the batch of $r$ sparse
interpolations that will be performed. Although the set of possible
roots is larger than the degree of any single minpoly $\Gamma_i$, in our
case the set is always bounded by the dimension of the matrix, so
performing this precomputation once is worthwhile.

\subsection{Batched root finding}

While fast root-finding procedures over many fields have already been
developed, we present a simple but efficient root-finding procedure for our
specific case that has the advantage of running in softly-linear time
over any coefficient field. The general idea is to first compute a
coprime basis (also called a gcd-free basis) for the minpolys $\Gamma_i$,
then perform multi-point evaluation with the precomputed list of
possible roots.

The details of procedure \rootfind{} are in \cref{alg:rootfind} below.
In the algorithm, we use a \emph{product tree} constructed from a list
of $t$ polynomials. This is a binary tree of height $\oh{\log t}$, with
the original polynomials at the leaves, and where every internal node is
the product of its two children. In particular, the root of the tree is
the product of all $t$ polynomials. The total size of a product tree,
and the cost of computing it, is softly-linear in the total size of the
input polynomials; see \citet{BM75} for more details.

\begin{algorithm}[htbp]
  \caption{\protect\hypertarget{rootfind}{\rootfindtext{}}%
    $(\rootlist, [\Gamma_1,\ldots,\Gamma_r])$%
    \label{alg:rootfind}}
  \KwIn{Set of possible roots $\rootlist\subseteq\F$ and $r$ polynomials
    $\Gamma_1,\ldots,\Gamma_r\in\F[z]$}
  \KwOut{A list of sets $[\polyroots_1,\ldots,\polyroots_r] \in \mathbb{P}(\rootlist)^r$
    such that
    $g_i(\alpha_j)=0$ for every $1\le i \le r$ and $\alpha_j \in \polyroots_i$}
  \BlankLine

  $\{g_0,\ldots,g_{t-1}\} \gets$
  coprime basis of $\{\Gamma_i\}_{1\le i \le r}$
  using Algorithm 18.1 of \citet{Ber05} \;

  \tcc{Compute product tree from bottom-up}
  \lFor{$j\gets 0,1,\ldots,t-1$}{$M_{0,j} \gets g_j$}
  \For{$i\gets 1,2,\ldots,\ceil{\log_2 t}$}{
    \For{$j\gets 0,1,\ldots,\ceil{t/2^i}-1$}{
      $M_{i,j} \gets M_{i-1,2j} \cdot M_{i-1,2j+1}$ \;
    }
  }

  \tcc{Find roots of basis elements}
  $S_{\ceil{\log_2 t}+1,0} \gets \rootlist$ \;
  \For{$i\gets \ceil{\log_2 t}, \ceil{\log_2 t}-1, \ldots, 0$}{
    \For{$j\gets 0,1,\ldots,\ceil{t/2^i}-1$}{
      Evaluate $M_{i,j}$ at each point in $S_{i+1,\floor{j/2}}$ using fast
        multipoint evaluation \label{rootfind:evals}\;
      $S_{i,j} \gets$ roots of $M_{i,j}$ found on previous step \;
    }
  }

  \tcc{Determine roots of original polynomials}
  Compute exponents
    $(e_{i,j})_{1\le i\le r, 0\le j < t}$ such that
    $\Gamma_i = g_0^{e_{i,0}}\cdots g_{t-1}^{e_{i,t-1}}$ for all $i$,
    using algorithm 21.2 of \citet{Ber05}
    \label{rootfind:factor}\;

  \lFor{$i\gets 1,2,\ldots, r$}{
    $\polyroots_i \gets \bigcup_{0\le j < t, e_{i,j} \ge 1} S_{0,j}$
  }
\end{algorithm}

\begin{lemma}\label{thm:rootfind}
  Given a set $\rootlist\subseteq \F$ with of size $\#\rootlist=c$
  and a list of $r$ polynomials $\Gamma_1,\ldots,\Gamma_r\in\F[z]$ each
  with degree at most $s$, \cref{alg:rootfind} \rootfind{} correctly
  determines all roots of all $\Gamma_i$'s which are in $\rootlist$ using
  $\softoh{rs+c}$ field operations
\end{lemma}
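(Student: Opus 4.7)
The plan is to argue correctness first via a top-down induction on the product tree, then to account for the cost level by level.

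For correctness, observe that because $\{g_0,\ldots,g_{t-1}\}$ is a coprime basis of $\{\Gamma_i\}$, each $\Gamma_i$ factors uniquely as $\prod_j g_j^{e_{i,j}}$, and since the $g_j$'s are pairwise coprime the roots of $\Gamma_i$ in $\F$ are exactly $\bigcup_{e_{i,j}\ge 1}\mathrm{roots}(g_j)$. So it suffices to show that each leaf set $S_{0,j}$ equals $\rootlist\cap\mathrm{roots}(g_j)$. I would prove the stronger invariant $S_{i,j}=\rootlist\cap\mathrm{roots}(M_{i,j})$ by induction on decreasing $i$. The base case at the virtual root is immediate since $S_{\ceil{\log_2 t}+1,0}=\rootlist$, and the inductive step uses that $M_{i,j}\mid M_{i+1,\floor{j/2}}$, so any $\alpha\in\rootlist$ that is a root of $M_{i,j}$ already lies in $S_{i+1,\floor{j/2}}$ by hypothesis and is therefore detected by the multipoint evaluation at \cref{rootfind:evals}. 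The final unions then assemble $\polyroots_i$ from exactly the correct leaves.

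For the cost, I would account piece by piece. Bernstein's Algorithm~18.1 produces the coprime basis in $\softoh{\sum_i\deg\Gamma_i}=\softoh{rs}$, and his Algorithm~21.2 (used in \cref{rootfind:factor}) outputs the exponent table in $\softoh{rs}$. Because the $g_j$'s are pairwise coprime, $\sum_j\deg g_j\le\deg\mathrm{lcm}(\Gamma_1,\ldots,\Gamma_r)\le rs$, so the product tree has total size $\softoh{rs}$ and is built in $\softoh{rs}$. The interesting step is the level-by-level multipoint evaluation: at level $i$ the total degree $\sum_j\deg M_{i,j}$ equals $\sum_j\deg g_j\le rs$, and crucially the nodes $M_{i,0},M_{i,1},\ldots$ at a single level are pairwise coprime (each is a product of a disjoint subset of the coprime $g_j$'s). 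Hence the sets $S_{i,j}$ are pairwise disjoint subsets of $\rootlist$, giving $\sum_j\#S_{i,j}\le c$. Standard subproduct-tree multipoint evaluation then costs $\softoh{\deg M_{i,j}+\#S_{i+1,\floor{j/2}}}$ per node, summing to $\softoh{rs+c}$ per level and $\softoh{rs+c}$ over all $O(\log t)$ levels. The closing unions output a total of $\sum_i\#\polyroots_i\le\sum_i\deg\Gamma_i\le rs$ elements and so add nothing.

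The main obstacle is precisely this level-by-level accounting: one must notice that pairwise coprimality of the coprime-basis polynomials propagates up the product tree and forces the candidate point sets at each level to form a partition of a subset of $\rootlist$. Without this observation the sets could a priori overlap and the cost of multipoint evaluation could balloon well beyond $\softoh{rs+c}$; with it, the bound is tight and the lemma follows by adding the $\softoh{rs}$ costs of the basis, product tree, factorization, and union steps to the $\softoh{rs+c}$ cost of the multipoint evaluations.
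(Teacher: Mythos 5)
Your proposal is correct and follows essentially the same route as the paper's proof: Bernstein's algorithms for the coprime basis and factoring in $\softoh{rs}$, the product tree built from the coprime basis, and the key observation that the nodes at each level are pairwise coprime so the candidate root sets at each level have total size at most~$c$, giving $\softoh{rs+c}$ per level over $O(\log(rs))$ levels. You spell out the correctness invariant $S_{i,j}=\rootlist\cap\mathrm{roots}(M_{i,j})$ and the disjointness argument a bit more explicitly than the paper does, but the substance and structure of the argument are the same.
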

\begin{proof}
  Algorithms 18.1 and 21.2 of \citet{Ber05} are deterministic and
  compute (respectively) a coprime basis and then a factorization of the
  $\Gamma_i$'s according to the basis elements.

  The polynomials $M_{i,j}$ form a standard product tree from the
  polynomials in the coprime basis. For any element of the product tree,
  its roots must be a subset of the roots of its parent node. So the
  multi-point evaluations on \cref{rootfind:evals} determine, for every
  polynomial in the tree, which elements of $\rootlist$ are roots of that
  polynomial.

  The cost of computing the coprime basis using Algorithm 18.1 in
  \citep{Ber05} is $\softoh{rs}$ with at least a $\log^5(rs)$ factor.
  This gives the first term in the complexity (and explains our use of
  soft-oh notation throughout).
  This cost dominates the cost of constructing the product tree and
  factoring the $\Gamma_i$'s on \cref{rootfind:factor}.

  Using the standard product tree algorithm \citep{BM75},
  the cost of multi-point evaluation of a degree-$n$ polynomial
  at each of $n$ points is $\softoh{n}$ field operations.
  In our algorithm,
  this happens at each level down the product tree. At each of
  $\ceil{\log_2 t} \in O(\log(rs))$ levels there
  are at most $c$ points (since the polynomials at each level are
  relatively prime). The total degree at each level is
  at most $\deg \prod_i \Gamma_i \le rs$.
  If $rs > c$, then this cost is already bounded by that of determining
  the coprime basis. Otherwise, the multi-point evaluations occur in
  $O(c/(rs))$ batches of $rs$ points each, giving the second term in the
  complexity statement.
\end{proof}

\subsection{Batched sparse interpolation algorithm}

We are now ready to present the full algorithm for performing
simultaneous sparse interpolation on $r$ polynomials based on $2s$
evaluations each.

\begin{algorithm}[htb]
  \caption{\protect\hypertarget{spinterp}{\spinterptext{}}%
    $(r,n,s, \expons, \theta, Y)$%
    \label{alg:spinterp}}
  \KwIn{Bounds $r,n,s\in\NN$, set $\expons\subseteq\{0,\ldots,n-1\}$ of
    possible exponents, high-order element $\theta\in\F$,
    and evaluations $Y_{i,j}\in\F$
    such that $f_i(\theta^j)=Y_{i,j}$ for all $1\le i\le r$ and
    $0\le j < 2s$}
  \KwOut{Nonzero coefficients and
    corresponding exponents of $f_i\in\F[x]$ or $0$ indicating failure,
    for each $1 \le i \le r$}
  \BlankLine
  \For{$i=1,2,\ldots,r$}{
    $\Gamma_i \gets $MinPoly$(Y_{i,0},\ldots,Y_{i,2s-1})$
      \label{spinterp:minpoly}
      using fast Berlekamp-Massey algorithm
      \;
  }
  $\polyroots_1,\ldots,\polyroots_r \gets \rootfind(\{\,\theta^d \mid d\in\expons\,\},
    \Gamma_1,\ldots, \Gamma_r)$\
    \label{spinterp:rootfind}\;
  \For{$i=1,2,\ldots,r$}{
    \lIf{$\#\polyroots_i \ne \deg \Gamma_i$}{$f_i\gets 0$}
    \Else{
      $t_i \gets \#\polyroots_i$ \;
      $\{e_{i,1},\ldots,e_{i,t_i}\} \gets \{\,d\in\expons \mid \theta^d
      \in \polyroots_i\,\}$ \;
      $[a_{i,1},\ldots,a_{i,t_i}] \gets$ solution of transposed
        Vandermonde system from roots
        $[\theta^{e_{i,1}},\ldots,\theta^{e_{i,t_t}}]$
        and right-hand vector $[Y_{i,0},\ldots,Y_{i,t_i-1}]$ \;
      $f_i \gets a_{i,1}x^{e_{i,1}} + \cdots + a_{i,t_i}x^{e_{i,t_i}}$
      \;
    }
  }
\end{algorithm}

\begin{theorem}\label{thm:spinterp}
  Let $\F$ be a field with an element $\theta\in\F$ whose multiplicative
  order is at least $n$, and
  suppose $f_1,\ldots,f_r\in\F[x]$ are unknown univariate polynomials.
  Given bounds $n,s\in\NN$,
  a set $\expons\subseteq \NN$ of possible exponents,
  and evaluations $f_i(\theta^j)$
  for all $1\le i \le r$ and $0\le j < 2s$,
  \cref{alg:spinterp} \spinterp{} requires
  $\softoh{rs + \#\expons \log n}$
  field operations in the worst case.

  The algorithm correctly recovers every polynomial $f_i$ such that
  $\#\supp f_i \le s$, $\supp f_i \subseteq \expons$, and $\deg f_i <
  n$.
\end{theorem}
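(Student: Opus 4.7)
My plan is to verify correctness and complexity separately, leaning on the standard Prony analysis together with \Cref{thm:rootfind}.

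For correctness, fix an $i$ satisfying the hypotheses and write $f_i = \sum_{j=1}^{t_i} a_{i,j} x^{e_{i,j}}$ with $t_i \le s$ and each $e_{i,j} \in \expons$ satisfying $e_{i,j} < n$. First I would observe that the evaluation sequence $(f_i(\theta^k))_{k\ge 0}$ is linearly recurrent with characteristic polynomial $\prod_{j}(z-\theta^{e_{i,j}})$. Since $\theta$ has multiplicative order at least $n$ and the exponents $e_{i,j}$ are distinct integers less than $n$, the values $\theta^{e_{i,j}}$ are pairwise distinct, so this product has degree exactly $t_i$ and is the minimal polynomial of the sequence. Hence Berlekamp-Massey on the $2s \ge 2t_i$ evaluations returns $\Gamma_i$ equal to this polynomial on \cref{spinterp:minpoly}. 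Since every root $\theta^{e_{i,j}}$ lies in $\{\theta^d : d\in \expons\}$ by the hypothesis $\supp f_i \subseteq \expons$, \rootfind returns $\polyroots_i$ containing all $t_i$ roots, and so $\#\polyroots_i = \deg \Gamma_i$, causing the branch that reconstructs $f_i$ to fire. The nodes $\theta^{e_{i,1}},\ldots,\theta^{e_{i,t_i}}$ are distinct, so the transposed Vandermonde system is nonsingular and its unique solution recovers the coefficients $a_{i,j}$ from the first $t_i$ evaluations.

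For the complexity, the algorithm does four kinds of work. First, before calling \rootfind, one must form the set $\{\theta^d : d\in \expons\}$; since each exponent is bounded by $n$, repeated squaring produces each power in $\oh{\log n}$ operations, for a total of $\oh{\#\expons \log n}$. Second, each application of the fast Berlekamp-Massey algorithm to $2s$ evaluations costs $\softoh{s}$, giving $\softoh{rs}$ over the batch. Third, by \Cref{thm:rootfind}, the call to \rootfind costs $\softoh{rs + \#\expons}$ since the minpolys have degree at most $s$ and the candidate set has size $\#\expons$. Fourth, each transposed Vandermonde solve of size $t_i \le s$ takes $\softoh{s}$ via standard fast multipoint evaluation/interpolation \citep{KY89}, for another $\softoh{rs}$ in aggregate. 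Summing and absorbing the $\#\expons$ term into $\#\expons \log n$ yields the claimed $\softoh{rs + \#\expons \log n}$ bound.

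The only genuinely delicate point is ensuring that the sanity check $\#\polyroots_i = \deg \Gamma_i$ does not reject a legitimate input. This is where the three hypotheses $\#\supp f_i \le s$, $\supp f_i \subseteq \expons$, and $\deg f_i < n$ must all be used together: the first ensures that $2s$ evaluations suffice for Berlekamp-Massey to recover the correct minpoly; the second ensures every root of $\Gamma_i$ lies in the precomputed candidate set; and the third, combined with $\theta$ having order at least $n$, ensures the roots $\theta^{e_{i,j}}$ are pairwise distinct so that $\deg \Gamma_i = t_i$. Everything else is bookkeeping on top of the cited subroutines.
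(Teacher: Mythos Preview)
Your proposal is correct and follows essentially the same approach as the paper's own proof: invoke \cref{thm:rootfind} for the $\softoh{rs+\#\expons}$ cost of root finding, add the $\oh{\#\expons\log n}$ for binary powering, and bound the Berlekamp--Massey and transposed Vandermonde steps by $\softoh{rs}$ via \citep{KY89}; for correctness, argue that under the three hypotheses the minpoly $\Gamma_i$ has exactly $t_i$ distinct roots, all in the candidate set, so the degree check passes and the Vandermonde solve recovers the coefficients. Your write-up is in fact somewhat more explicit than the paper's, spelling out the Prony recurrence and the precise role of each of the three hypotheses in making the check $\#\polyroots_i=\deg\Gamma_i$ succeed.
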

\begin{proof}
  From \cref{thm:rootfind}, the cost of \rootfind{} is
  $\softoh{rs+\#\expons}$. This dominates the cost of the minpoly and
  transposed Vandermonde computations, which are both $\softoh{rs}$
  \citep{KY89}. Using binary powering, computing the set
  $\{\,\theta^d\mid d\in\expons\,\}$ on \cref{spinterp:rootfind}
  requires $O(\#\expons \log n)$ field operations.

  If $f_i$ has at most $t_i\le s$ nonzero terms, then the minpoly $\Gamma_i$
  computed on \cref{spinterp:minpoly} is a degree-$t_i$ polynomial
  with exactly $t_i$ distinct roots. If $\supp(f_i) \subseteq \expons$,
  then by \cref{thm:rootfind}, \rootfind{} correctly finds all these
  roots. Because these are the only terms in $f_i$, solving the
  transposed Vandermonde system correctly determines all corresponding
  coefficients; see \citep{KY89} or \citep{HL13} for details.
\end{proof}

\section{Performing evaluations}\label{sec:evals}

Consider the matrix multiplication with errors problem, recovering an
sparse error matrix $E \in \F^{r\times n}$ according to the
equation
$E = C - AB$. In this
section, we show how to treat each row of $E$ as a sparse polynomial and
evaluate each of these at consecutive powers of a high-order element
$\theta$, as needed for the sparse interpolation algorithm from
\cref{sec:spinterp}. The same techniques will also be used in the
matrix inverse with errors algorithm.

Consider a column vector of powers of an
indeterminate, $\xx = [1,x,x^2,\ldots,x^{n-1}]^T$.
The matrix-vector product $E\xx$ then consists of $m$ polynomials, each
degree less than $n$, and with a 1-1 correspondence between nonzero
terms of the polynomials and nonzero entries in the corresponding row of
$E$.

Evaluating every polynomial in $E\xx$ at a point
$\theta\in\F$ is the same as multiplying $E$ by a vector
$\vv=[1,\theta,\theta^2,\ldots,\theta^{n-1}]^T$.
Evaluating each polynomial in $E\xx$ at the first $2s$ powers
$1,\theta,\theta^2,\ldots,\theta^{2s-1}$ allows for the unique recovery
of all $t$-sparse rows, according to \cref{thm:spinterp}.
This means multiplying $E$ times an \emph{evaluation matrix}
$V\in\F^{n\times 2s}$ such that $V_{i,j} = \theta^{ij}$.

Consider a single row vector $\uu \in \F^{1\times n}$ with $c$ nonzero
entries. Multiplying $\uu V$ means evaluating a $c$-sparse polynomial at
the first $2s$ powers of $\theta$; we can view it as removing the $n-c$
zero entries from $\uu$ to give $\uu'\in\F^{1\times c}$, and removing
the same $n-c$ rows from $V$ to get $V'\in\F^{c\times 2s}$, and then
computing the product $\uu'V'$. This evaluation matrix with removed rows
$V'$ is actually a \emph{transposed Vandermonde matrix} built from the
entries $\theta^i$ for each index $i$ of a nonzero entry in $\uu$.

An explicit algorithm for this transposed Vandermonde matrix application
is given in Section 6.2 of \citet{BLS03}, and they show that the
complexity of multiplying $\uu'V'$ is $\softoh{c + s}$ field operations,
and essentially amounts to a power series inversion and product tree traversal.
(See also Section 5.1 of \citet{HL13} for a nice description of this
approach to fast evaluation of a sparse polynomial at consecutive
powers.)
We repeat this for each row in a given matrix $M$ to compute $AV$.

Now we are ready to show how to compute $EV = (C-AB)V$ efficiently,
using the approach just discussed to compute $CV$ and $BV$ before
multiplying $A$ times $BV$ and performing a subtraction.
As we will show, the rectangular multiplication of $A$ times $BV$ is the
only step which is not softly-linear time, and this dominates the
complexity.

\begin{algorithm}[tbp]
  \caption{\protect\hypertarget{diffeval}{\diffevaltext{}}%
    $(A,B,C,\theta,s)$%
    \label{alg:diffeval}}
  \KwIn{Matrices $A\in\F^{r\times\ell}$, $B\in\F^{\ell\times n}$,
    $C\in\F^{r\times n}$, high-order element $\theta\in\F$,
    and integer $s\in\NN$}
  \KwOut{Matrix $Y=(C-AB)V \in\F^{r\times 2s}$,
    where $V$ is the $r\times 2s$
    evaluation matrix given by $(\theta^{ij})_{0\le i < n, 0\le j < 2s}$.}
  \BlankLine
  Pre-compute $\theta^j$ for all indices $j$ of nonzero columns in
  $B$ or $C$ \;
  $Y_C \gets CV$ using the pre-computed $\theta^j$'s
    and fast transposed Vandermonde applications row by row
    \label{diffeval:tvan1} \;
  $Y_B \gets BV$ similarly \label{diffeval:tvan2} \;
  $Y_{AB} \gets A Y_B$ using dense or sparse multiplication,
    whichever is faster \;
  \Return{$Y_C - Y_{AB}$}
\end{algorithm}

\begin{lemma}\label{thm:diffeval}
  \Cref{alg:diffeval} \diffeval{} always returns the correct
  matrix of evaluations $(C-AB)V$ and uses
  $$\softoh{\#B + \#C + n + s\cdot \min\left(\#A+r,
  \frac{r\ell}{\min(r,s,\ell)^{3-\omega}}\right)}$$
  field operations.
\end{lemma}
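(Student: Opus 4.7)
The plan is to verify correctness by direct algebraic identity, then analyze the cost of each of the four computational steps in turn, showing that all the auxiliary terms get absorbed into the bound as stated.

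Correctness is immediate from distributivity of matrix multiplication: by construction $Y_C = CV$ and $Y_B = BV$, so $Y_{AB} = AY_B = ABV$, and therefore the returned matrix $Y_C - Y_{AB} = CV - ABV = (C-AB)V$ is exactly as specified.

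For the complexity, I would split the analysis into four pieces. First, precomputing $\theta, \theta^2, \ldots, \theta^{n-1}$ by repeated multiplication costs $O(n)$ field operations and produces all the $\theta^j$ values we might need; this gives the $n$ term. Second, each nonzero row of $C$ with $c$ nonzero entries corresponds to a $c$-sparse polynomial of degree less than $n$ being evaluated at the $2s$ consecutive powers $1, \theta, \ldots, \theta^{2s-1}$, which by the transposed Vandermonde application of \citet{BLS03} costs $\softoh{c+s}$; summing over the at most $r$ nonzero rows of $C$ gives $\softoh{\#C + rs}$ for computing $Y_C$. Third, $Y_B = BV$ is computed the same way, with cost $\softoh{\#B + b_B \cdot s}$, where $b_B$ is the number of rows of $Y_B$ that actually need to be formed. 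Fourth, $Y_{AB} = A Y_B$ is computed either by the sparse bound $O(\#A \cdot s)$ of \cref{thm:sparsemul}, or by the dense rectangular bound $O(r\ell s / \min(r,\ell,s)^{3-\omega})$ of \cref{thm:rectmul}, whichever is smaller; the final subtraction $Y_C - Y_{AB}$ contributes $O(rs)$, which is absorbed.

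The main obstacle is verifying that the $rs$ term (from step two) and the $b_B s$ term (from step three) are both swallowed by the minimum in the stated bound. For $rs$, note that $rs \le s(\#A + r)$, so it is always dominated by the sparse branch of the $\min$. For $b_B s$, I would observe that the algorithm can be organized so that the choice of variant in step four determines $b_B$: when the sparse product is used for $Y_{AB}$, only those rows of $Y_B$ whose index is a nonzero column of $A$ contribute, so $b_B \le \#A$ and hence $b_B s \le \#A \cdot s \le s(\#A + r)$. When the dense product is used, we have $b_B \le \ell$, and then $\ell s \le r\ell s / \min(r,\ell,s)^{3-\omega}$ because $r \ge \min(r,\ell,s) \ge \min(r,\ell,s)^{3-\omega}$ (using $\omega \ge 2$, so $3-\omega \le 1$). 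In both cases the contribution is dominated by the matching branch of the minimum, yielding the claimed bound $\softoh{\#B + \#C + n + s \cdot \min(\#A + r, r\ell / \min(r,s,\ell)^{3-\omega})}$.
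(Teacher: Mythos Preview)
Your proposal is correct and follows essentially the same approach as the paper's own proof: establish correctness by the algebraic identity $(C-AB)V = CV - A(BV)$, bound the precomputation of powers by $\softoh{n}$, bound the row-by-row transposed Vandermonde applications by $\softoh{\#B + \#C + rs + \ell s}$, and bound the rectangular product $A\cdot(BV)$ by the minimum of the sparse and dense costs. Your treatment of the absorption of the $rs$ and $b_B s$ terms---in particular, arguing that in the sparse branch only the rows of $BV$ indexed by nonzero columns of $A$ need be formed---is exactly the same idea the paper uses, just stated a bit more explicitly.
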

\begin{proof}
  Correctness comes from the discussion above and the correctness of the
  algorithm in section 6.2 of \citet{BLS03} which is used on
  \cref{diffeval:tvan1,diffeval:tvan2}.

  The cost of pre-computing all powers $\theta^j$ using binary powering
  is $O(n\log n)$ field operations, since the number of nonzero columns
  in $B$ or $C$ is at most $n$. Using these precomputed values,
  each transposed Vandermonde apply
  costs $\softoh{k + s}$, where $k$ is the number of
  nonzero entries in the current row. Summing over all rows of $C$ and
  $B$ gives $\softoh{\#B + \#C + rs + \ell s}$ for these steps.

  The most significant step is the rectangular multiplication of
  $A\in\F^{r\times \ell}$ times $Y_B\in\F^{\ell \times s}$.
  Using sparse multiplication, the cost is $\oh{\#A\cdot s}$, by
  \cref{thm:sparsemul}. Using dense multiplication, the cost is
  $\oh{r\ell s/\min(r,\ell,s)^{\omega-3}}$ according to
  \cref{thm:rectmul}.

  Because all the relevant parameters $\#A,s,r,\ell$
  are known before the multiplication, we assume that the underlying
  software makes the best choice among these two algorithms.

  Note that $rs$ and $\ell s$ are definitely dominated by the cost of
  the dense multiplication. In the sparse multiplication case, the
  situation is more subtle when $\#A < \ell$. But in this case, we
  just suppose that rows of $BV$ which are not used in the sparse
  multiplication $A(BV)$ are never computed, so the complexity is as
  stated.
\end{proof}

\section{Multiplication with error correction algorithm}
\label{sec:mul}

We now have all the necessary components to present the complete
multiplication algorithm, \cref{alg:mulerr} \mulerr{}.
The general idea is to repeatedly use \nzrows{} to determine the
locations on nonzeros in $E$, then \diffeval{} and \spinterp{} to
recover half of the nonzero rows at each step.

Setting the target
sparsity of each recovered row to $s=\ceil{2k/r}$, where $r$ is the
number of nonzero rows, ensures that at least half of the rows are
recovered at each step. To get around the fact that the number of errors
$k$ is initially unknown, we start with an initial guess of $k=1$, and
double this guess whenever fewer than half of the remaining rows are
recovered.

The procedure is probabilistic of the Monte Carlo type \emph{only}
because of the Monte Carlo subroutine \nzrows{} to determine which rows
are erroneous. The other parts of the algorithm --- computing
evaluations and recovering nonzero entries --- are deterministic based
on the bounds they are given.

\begin{algorithm}[tbp]
  \caption{\protect\hypertarget{mulerr}{\mulerrtext{}}%
    $(A,B,C,\theta,\epsilon)$%
    \label{alg:mulerr}}
  \KwIn{Matrices $A\in\F^{m\times\ell}$, $B\in\F^{\ell\times n}$,
    $C\in\F^{m\times n}$, high-order element $\theta\in\F$,
    and error bound $0<\epsilon<1$}
  \KwOut{Matrix $E\in\F^{m\times n}$ such that, with probability
    at least $1-\epsilon$, $AB = C-E$}
  \BlankLine
  $k \gets 1$ \;
  $E \gets \mathbf{0}^{m\times n}$ \;
  $\epsilon' \gets \epsilon / (4\ceil{\log_2 (mn)} + 1)$ \;
  $\rowinds \gets \nzrows(V \mapsto CV - A(BV), \epsilon')$ \;
  \While{$\#\rowinds \ge 1$}{
    \If{$\#\nzrows(V\mapsto (V^T(C-E) - (V^T A)B)^T, \epsilon')
        > \#\rowinds$}{
      Transpose $C$ and $E$, swap and transpose $A$ and $B$,
      replace $\rowinds$ \;
    }
    $r \gets \#\rowinds$ \;
    $A' \gets$ submatrix of $A$ from rows in $\rowinds$ \;
    $C' \gets$ submatrix of $(C-E)$ from rows in $\rowinds$ \;
    $s \gets \ceil{2(k-\#E)/r}$ \;
    $Y \gets \diffeval(A',B,C',\theta,s)$ \;
    $f_1,\ldots,f_{r} \gets \spinterp(r,n,s,\theta,Y)$ \;
    \For{$i \gets 1,2,\ldots,r$}{
      Set $(\rowinds_i,e)$th entry of $E$ to $c$ for each
      term $cx^e$ of $f_i$ \;
    }
    $\rowinds \gets \nzrows(V\mapsto (C-E)V - A(BV), \epsilon')$ \;
    \If{$\#\rowinds > r/2$}{
      $k \gets 2k$ \;
      \lIf{$k \ge 2n\#J$}{\Return{$C-AB$}\label{mulerr:giveup}}
    }
    \ForEach{$i \in \rowinds$}{
      Clear entries from row $i$ of $E$ added on this iteration \;
    }
  }
  \Return{$E$}
\end{algorithm}

\begin{theorem}\label{thm:mulerr}
  With probability at least $1-\epsilon$,
  \cref{alg:mulerr} \mulerr{} finds all errors in $C$
  and uses
  $$\softoh{m+\ceil{\log_{\#\F}\tfrac{1}{\epsilon}}(n+t)
    + k\cdot \min\left(
      \ceil{\frac{t}{r}},\,
        \ell/\min(\ell,r,\tfrac{k}{r})^{3-\omega}\right)}$$
  field operations,
  where $k$ is the \emph{actual} number of errors in the given product
  $C$. Otherwise, it uses
  $\softoh{m\ell n/\min(m,\ell,n)^{3-\omega}}$ field operations and may
  return an incorrect result.
\end{theorem}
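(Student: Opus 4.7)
My plan has three parts: bounding the failure probability by a union bound over \nzrows calls, bounding the number of iterations via a potential argument on $(k, r)$, and summing the per-iteration costs geometrically.

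For correctness, the only randomness comes from \nzrows, which by \cref{thm:nzrows} fails with probability at most $\epsilon'$. I would count the total number of \nzrows calls: one before the main loop, plus two per iteration (the orientation check and the post-update recomputation of $\rowinds$). Each iteration either halves $r$ (when at least half of the rows are recovered) or doubles $k$, and since $r \le m \le mn$ and $k \le 2mn$ before the giveup, the loop runs at most $2\lceil \log_2(mn)\rceil + O(1)$ times. The union bound with $\epsilon' = \epsilon/(4\lceil\log_2(mn)\rceil+1)$ gives total failure probability at most $\epsilon$. Conditional on all \nzrows calls succeeding, the set $\rowinds$ is exactly the nonzero rows of the residual $C-E-AB$, so the loop exits only when $E$ is correct. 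The key averaging argument is: when the current $k$ upper-bounds the residual error count, at most $r/2$ of the $r$ residual rows can have sparsity exceeding $s = \lceil 2(k-\#E)/r\rceil$, so by \cref{thm:spinterp} at least $r/2$ rows are correctly recovered.

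For termination, $k$ starts at $1$ and is doubled at most $\lceil\log_2 K\rceil$ times before exceeding the true error count $K$, after which every iteration halves the residual $r$ until the loop exits. If the algorithm ever hits \cref{mulerr:giveup}, this can only occur because an earlier \nzrows undercounted the nonzero rows; in that (low-probability) case the algorithm falls back to computing $C-AB$ directly, which by \cref{thm:rectmul} costs $\softoh{m\ell n/\min(m,\ell,n)^{3-\omega}}$, matching the failure-case bound.

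For the complexity in the success case, each iteration's dominant cost is \diffeval, which by \cref{thm:diffeval} is
\[
\softoh{\,t + n + s\cdot\min\bigl(\#A'+r,\ r\ell/\min(r,s,\ell)^{3-\omega}\bigr)\,}.
\]
Using $rs = O(k)$ and $\#A' \le t$, this becomes $\softoh{t+n+k\cdot\min(\lceil t/r\rceil,\, \ell/\min(r,k/r,\ell)^{3-\omega})}$. The \nzrows and \spinterp calls contribute lower-order terms $\softoh{\lceil\log_{\#\F}(1/\epsilon)\rceil(t+n)+m}$, where the $O(n)$ comes from generating powers of $\theta$ and the $O(m)$ from reading off zero rows. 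The main obstacle, and the step I expect to require the most care, is summing the per-iteration \diffeval costs across the $O(\log(mn))$ iterations: within a fixed-$k$ regime $r$ halves geometrically, so I need to verify that the $k\cdot\min(\cdots)$ expression is monotone enough in $r$ to give a bounded geometric series, and that across doubling phases the cost at the final $k = O(K)$ dominates the sum over all previous (smaller-$k$) phases.
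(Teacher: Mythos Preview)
Your proposal is correct and follows essentially the same approach as the paper: both argue that each loop iteration either doubles $k$ or halves $r$, bound the total number of iterations by $O(\log(mn))$, apply a union bound over the \nzrows{} calls, and cite \cref{thm:nzrows,thm:diffeval,thm:spinterp} for the per-iteration cost. You are in fact more careful than the paper, which does not explicitly carry out the geometric summation you flag as the main obstacle; the paper simply absorbs the $O(\log(mn))$ iterations into the soft-oh and asserts that the running time ``follows directly'' from the subroutine lemmas.
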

\begin{proof}
  The only probabilistic
  parts of the algorithm are the calls to \nzrows{}, which can only fail
  by incorrectly returning too few rows. If this never happens, then
  each iteration through the while loop either (a) discovers that the
  value of $k$ was too small and increases it, or (b) recovers half of
  the rows or columns of $E$.

  So the total number of iterations of the
  while loop if the calls to \nzrows{} are never incorrect is at most
  $\ceil{\log_2 k} + \ceil{\log_2 m} + \ceil{\log_2 n}.$
  By the union bound, from the way that $\epsilon'$ is computed and the
  fact that there are two calls to \nzrows{} on every iteration, the
  probability that \nzrows{} is never incorrect is at least
  $1-\epsilon$.

  In this case, the rest of the correctness and the running time follow
  directly from \cref{thm:nzrows,thm:diffeval,thm:spinterp}.

  If a call to \nzrows{} returns an incorrect result, two bad things can
  happen. First, if $\rowinds$ is too small during some iteration, all
  of the evaluations may be incorrect, leading to $k$ being increased.
  In extremely unlucky circumstances, this may happen $O(\log (rn))$
  times until the product is na\"ively recomputed on
  \cref{mulerr:giveup}, leading to the worst-case running time.

  The second bad thing that can occur when a call to \nzrows{} fails is
  that it may incorrectly report all errors have been found, leading to
  the small $1-\epsilon$ chance that the algorithm returns an incorrect
  result.
\end{proof}

\section{Inverse error correction algorithm}\label{sec:inverse}

As discussed in \cref{sec:over:inverse}, our algorithm for correcting
errors in a matrix inverse follows mostly the same outline as that for
correcting a matrix product, with two important changes. These are the
basis for the next two lemmas.

\begin{lemma}\label{thm:invnz}
  For any matrices $E,A\in\F^{n\times n}$ where $A$ is invertible,
  a call to $\nzrows(V\mapsto EAV,\epsilon)$
  correctly returns the nonzero rows of $E$ with probability
  at least $1-\epsilon$.
\end{lemma}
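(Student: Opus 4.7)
The plan is to reduce this to a direct application of \cref{thm:nzrows}, using invertibility of $A$ in exactly one place. Internally, the call $\nzrows(V\mapsto EAV,\epsilon)$ samples a uniformly random matrix $V\in\F^{n\times \ell}$ (for $\ell = \ceil{\log_{\#\F}(n/\epsilon)}$), invokes the black box to form $EAV$, and returns the indices of its nonzero rows. Taking $M = EA$ in \cref{thm:nzrows} then immediately yields that, with probability at least $1-\epsilon$, this set of indices is exactly the set of nonzero rows of $EA$.

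What remains is to identify the nonzero rows of $EA$ with the nonzero rows of $E$. For any row vector $\uu^T\in\F^{1\times n}$, we have $\uu^T A = \mathbf{0}$ iff $A^T\uu = \mathbf{0}$ iff $\uu = \mathbf{0}$, since $A$ (and hence $A^T$) is invertible. Applied row-by-row, this shows that row $i$ of $EA$ vanishes iff row $i$ of $E$ vanishes, so the support of nonzero rows is preserved.

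An equivalent and perhaps cleaner framing would be that, because $A$ is a bijection on $\F^n$, the map $V\mapsto AV$ is a bijection on $\F^{n\times \ell}$ and therefore preserves the uniform distribution; hence computing $EAV$ with $V$ uniform is statistically identical to computing $EV'$ with $V' := AV$ uniform, so the statement reduces verbatim to \cref{thm:nzrows} on $E$. There is no real obstacle: the entire content is the observation that invertibility of $A$ simultaneously (i)~preserves the row-support of $E$ and (ii)~preserves the distribution of the probing matrix.
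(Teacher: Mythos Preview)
Your proposal is correct, and in fact your second framing---that $V\mapsto AV$ is a bijection on $\F^{n\times\ell}$ and hence preserves the uniform distribution, so probing $E$ with $AV$ is distributionally identical to probing $E$ with a fresh uniform matrix---is exactly the argument the paper gives.

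Your first framing is a genuinely cleaner alternative: rather than opening up the proof of \cref{thm:nzrows} to reason about distributions, you apply \cref{thm:nzrows} directly as a black box with $M=EA$, and then separately argue via invertibility that the nonzero-row support of $EA$ coincides with that of $E$. This has the advantage of not depending on the internal randomization mechanism of \nzrows{} at all (in particular, it would continue to work even if the probing entries were drawn from a distribution not preserved by the linear map $V\mapsto AV$). The paper's distributional argument, by contrast, ties correctness here to the specific uniform sampling used inside \nzrows{}, though it does make the connection to \cref{thm:nzrows} on $E$ itself (rather than on $EA$) more transparent.
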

\begin{proof}
  Recall from the proof of \cref{thm:nzrows} that the correctness of
  \nzrows{} depends on applying a matrix $V\in\F^{n\times\ell}$ of
  random entries to the unknown matrix $E$.

  In the current formulation, instead
  of applying the random matrix $V$ directly to $E$, instead the product
  $AV$ is applied. But because $A$ is nonsingular, there is a 1-1
  correspondence between the set of all matrices $V\in\F^{n\times\ell}$ and
  matrices the set $\{\,AV \mid V\in\F^{n\times\ell}\,\}$.
  That is, applying a random matrix to $EA$ is the same as applying a
  random matrix to $E$ itself, and therefore the same results hold.
\end{proof}

\begin{lemma}\label{thm:lowrank}
Given any rank-$r$ matrix $A\in\F^{n\times r}$,
it is possible to compute a matrix $X\in\F^{r\times r}$ formed from a
subset of the rows of $A$, and its inverse $X^{-1}$, using
$\softoh{\#A + r^\omega}$ field operations.
\end{lemma}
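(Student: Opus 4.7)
The plan is a two-step reduction. First I would identify $r$ linearly independent rows of $A$ via the sparse algorithm of \citet{CKL13}, and second I would invert the resulting $r \times r$ submatrix using a standard fast dense inversion routine.

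For the first step, I apply the \citet{CKL13} procedure as a black box to $A$. Since $\mathrm{rank}(A) = r$, the maximal linearly independent row set they produce has exactly $r$ indices $\rowinds \subseteq \{1,\ldots,n\}$, and the corresponding submatrix $X \in \F^{r\times r}$ is invertible by construction. The complexity bound I would cite is softly linear in $\#A$ together with a dense-linear-algebra term of order $r^\omega$, giving $\softoh{\#A + r^\omega}$ field operations. Assembling $X$ explicitly from the sparse representation of $A$ restricted to rows in $\rowinds$ then costs at most $O(\#A)$ additional operations, and trivially $\#X \le \#A$.

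For the second step, since $X$ is a nonsingular $r \times r$ matrix, its inverse can be computed in $O(r^\omega)$ field operations using any fast dense inversion routine, via the classical reduction of matrix inversion to matrix multiplication \citep{BH74}. Adding the three contributions yields the claimed $\softoh{\#A + r^\omega}$ total.

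The main obstacle will be verifying that the cited result of \citet{CKL13} matches what is needed here precisely: namely, that it returns concrete row \emph{indices} of a rank-revealing submatrix (rather than merely certifying the rank, or producing a basis for the row space in some abstract form), and that its stated complexity indeed collapses to $\softoh{\#A + r^\omega}$ when the input has rank $r$. If the CKL13 procedure is Monte Carlo, this lemma inherits the same probabilistic guarantee, but the complexity bound is unaffected; in particular, any failure probability can be absorbed into the overall $\epsilon$ budget of \cref{alg:inverserr} without enlarging the running time.
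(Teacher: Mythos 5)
Your proposal matches the paper's argument: both steps (finding a maximal set of independent rows via \citet{CKL13}, then inverting the resulting $r\times r$ submatrix with the \citet{BH74} reduction to fast multiplication) are exactly what the paper cites, and the complexity accounting is the same. The paper adds only one remark you omit — that \citet{SY15} gives an alternative route yielding the lexicographically minimal independent row set and a representation of the inverse in the same time — but this is an aside, not part of the main argument.
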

\begin{proof}
  This is a direct consequence of \citep[Theorem 2.11]{CKL13},
  along with the classic algorithm of \citep{BH74} for fast matrix
  inversion. Alternatively, one could use the approach of
  \citep{SY15} to compute the lexicographically minimal set of linearly
  independent rows in $M$, as well as a representation of the inverse, in
  the same running time.
\end{proof}

The resulting algorithm for matrix inversion with errors is presented in
\cref{alg:inverserr} \inverserr{}.

\begin{algorithm}[tbp]
  \caption{\protect\hypertarget{inverserr}{\inverserrtext{}}%
    $(A,B,\theta,\epsilon)$%
    \label{alg:inverserr}}
  \KwIn{Matrices $A,B\in\F^{n\times n}$,
    high-order element $\theta\in\F$,
    and error bound $0<\epsilon<1$}
  \KwOut{Matrix $E\in\F^{n\times n}$ such that, with probability
    at least $1-\epsilon$, $A^{-1} = B+E$}
  \BlankLine
  $k \gets 1$ \;
  $E \gets \mathbf{0}^{n\times n}$ \;
  $\epsilon' \gets \epsilon / (8\ceil{\log_2 n} + 1)$ \;
  $\rowinds \gets \nzrows(V \mapsto V - B(AV), \epsilon')$
    \label{inverserr:nzr1}\;
  \While{$\#\rowinds \ge 1$}{
    \If{$\#\nzrows(V\mapsto V - A((B+E)V), \epsilon')
        > \#\rowinds$}{
      Transpose $A$, $B$, and $E$, and replace $\rowinds$ \;
    }
    $r \gets \#\rowinds$ \;
    $A' \gets$ submatrix of $A$ from columns in $\rowinds$ \;
    $\rowinds' \gets$ set of $r$ linearly independent rows in $A'$ according to
      \cref{thm:lowrank} \;
    $I', A'' \gets$ submatrices of $I,A$ from the rows chosen in
      $\rowinds'$ \;
    $X^{-1} \gets$ inverse of submatrix of $A'$ according to
      \cref{thm:lowrank} \;
    $s \gets \ceil{2(k-\#E)/r}$ \;
    $Y \gets \diffeval(A'',B,I',\theta,s)$ \label{inverserr:diffeval}\;
    $Y' \gets X^{-1} Y$ \;
    $f_1,\ldots,f_{r} \gets \spinterp(r,n,s,\theta,Y')$ \;
    \For{$i \gets 1,2,\ldots,r$}{
      Set $(\rowinds_i,e)$th entry of $E$ to $c$ for each
      term $cx^e$ of $f_i$ \;
    }
    $\rowinds \gets \nzrows(V \mapsto V - (B+E)(AV), \epsilon')$
      \label{inverserr:nzr2}\;
    \If{$\#\rowinds > r/2$}{
      $k \gets 2k$ \;
      \lIf{$k \ge 2n\#J$}{\Return{$A^{-1}-B$}\label{inverserr:giveup}}
    }
    \ForEach{$i \in \rowinds$}{
      Clear entries from row $i$ of $E$ added on this iteration \;
    }
  }
  \Return{$E$}
\end{algorithm}

\begin{theorem}\label{thm:inverserr}
  With probability at least $1-\epsilon$,
  \cref{alg:inverserr} \inverserr{} finds all errors in $B$
  and uses
  $$\softoh{\ceil{\log_{\#\F}\tfrac{1}{\epsilon}}t
    + nk/\min(r,\tfrac{k}{r})^{3-\omega}
    + r^\omega}$$
  field operations,
  where $k$ is the \emph{actual} number of errors in the given product
  $C$. Otherwise, it uses $\softoh{n^\omega}$ field operations and may
  return an incorrect result.
\end{theorem}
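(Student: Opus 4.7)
The plan is to mirror the proof of \cref{thm:mulerr} with two key modifications dictated by \cref{thm:invnz,thm:lowrank}. As in the multiplication case, the only randomness lies in the \nzrows{} calls, each of which can only fail by returning a strict subset of the true nonzero row indices of $E$; all other subroutines (\diffeval{}, \spinterp{}, and the low-rank inversion) are deterministic conditional on the bounds they receive.

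First I would handle the probabilistic analysis. The black-box maps passed to \nzrows{} on \cref{inverserr:nzr1,inverserr:nzr2} effectively apply $E$ composed with $A$ to the random vector; by \cref{thm:invnz}, the nonsingularity of $A$ means that the analysis of \cref{thm:nzrows} transfers verbatim. Conditional on every \nzrows{} call returning the full row set, each loop iteration either doubles $k$ or halves the number of still-unrecovered nonzero rows of $E$, so the loop terminates within $O(\log n)$ iterations before the guard on \cref{inverserr:giveup} fires. With at most two \nzrows{} calls per iteration, setting $\epsilon' = \epsilon/(8\ceil{\log_2 n}+1)$ and applying a union bound yields overall success probability at least $1-\epsilon$.

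Next I would verify correctness of a single successful iteration. Write $\tilde{E}\in\F^{r\times n}$ for the submatrix of $E$ on the rows indexed by $\rowinds$ and $A'$ for the column-submatrix of $A$ on those same indices; because the other rows of $E$ are zero, the identity $AE = I - AB$ reduces to $A'\tilde{E} = I - AB$. Restricting to the $r$ linearly independent rows of $A'$ chosen via \cref{thm:lowrank} gives $X\tilde{E} = I' - A''B$ with $X$ invertible, so $\tilde{E}V = X^{-1}(I' - A''B)V$. Thus \diffeval{} (by \cref{thm:diffeval}) followed by the premultiplication $X^{-1}Y$ correctly produces the evaluations of each row of $\tilde{E}$ at consecutive powers of $\theta$, and \cref{thm:spinterp} then recovers every row of sparsity at most $s = \ceil{2(k-\#E)/r}$. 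A pigeonhole argument shows at least half the rows satisfy this bound whenever the running $k$ upper-bounds the true error count, justifying the halving claim above.

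For the complexity, I would sum per-iteration costs across the $O(\log n)$ iterations. The \diffeval{} cost $\softoh{t + nk/\min(r,k/r)^{3-\omega}}$ and the \cref{thm:lowrank} cost $\softoh{\#A' + r^\omega} \subseteq \softoh{t + r^\omega}$ dominate; the product $X^{-1}Y$ is an $r\times r$ times $r\times 2s$ dense multiplication whose cost is subsumed by \diffeval{}. Because $k$ and $r$ evolve only by doubling or halving, a geometric summation yields the stated bound, with the $\log_{\#\F}(1/\epsilon)\,t$ term absorbing the cost of the \nzrows{} black-box applies. The main obstacle is confirming the worst-case $\softoh{n^\omega}$ clause: a spurious \nzrows{} failure can trigger repeated doubling of $k$, but this halts after $O(\log n)$ iterations once the guard on \cref{inverserr:giveup} fires and falls back to a direct inverse computation, exactly as in the proof of \cref{thm:mulerr}.
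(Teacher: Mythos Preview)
Your proposal is correct and follows the same approach as the paper's own proof: reduce to \cref{thm:mulerr} by invoking \cref{thm:invnz} for the row-detection step and \cref{thm:lowrank} for the submatrix inversion, with the latter contributing the extra $r^\omega$ term. The paper's argument is considerably terser---it just names the two formulas $EA=I-BA$ and $AE=I-AB$, observes that the $\softoh{r^\omega}$ cost from \cref{thm:lowrank} is the only new ingredient (and that it eliminates any benefit from sparse multiplication), and defers everything else to the proof of \cref{thm:mulerr}; your write-up fills in those same details explicitly.
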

\begin{proof}
  In this algorithm, we make use of two different formulas for $E$:
  \begin{align}
    EA &= I - BA \label{eqn:ea} \\
    AE &= I - AB \label{eqn:ae}
  \end{align}

  The first formula \eqref{eqn:ea} is used to determine the nonzero rows
  of $E$ on \cref{inverserr:nzr1,inverserr:nzr2}, and the second
  formula \eqref{eqn:ae} is used to evaluate the rows of $XE$ on
  \cref{inverserr:diffeval}.

  The main difference in this algorithm compared to \mulerr{} is the
  need to find $r$ nonzero rows of $A'$ to constitute the matrix $X$ and
  compute its inverse $X^{-1}$. According to \cref{thm:lowrank} these
  both cost $\softoh{r^\omega}$, which gives the additional term in the
  complexity statement. Note that this also eliminates the utility of
  sparse multiplication in all cases, simplifying the complexity
  statement somewhat compared to that of \mulerr{}.

  The rest of the proof is identical to that of \cref{thm:mulerr}.
\end{proof}

\iftoggle{sage}{
\section{Sage implementation}\label{sec:imp}
}{}

%\bibliography
\newcommand{\Gathen}{\relax}\newcommand{\Hoeven}{\relax}

\end{document}